\newtheorem{theorem}{Theorem}[section]
\newtheorem{theorem*}{Theorem}
\newtheorem{lemma}[theorem]{Lemma}
\newtheorem{proposition}[theorem]{Proposition}
\newtheorem{definition}[theorem]{Definition}
\newtheorem{claim}[theorem]{Claim}
\theoremstyle{definition}\newtheorem{remark}[theorem]{Remark}
\newcommand{\inparen}[1]{\left(#1\right)}             
\newcommand{\inbrace}[1]{\left\{#1\right\}}           
\newcommand{\abs}[1]{\left|#1\right|}                 
\newcommand{\ket}[1]{\left| #1 \right\rangle}
\newcommand{\braket}[2]{\langle #1|#2 \rangle}
\newcommand{\F}{\mathbb{F}}
\newcommand{\Q}{\mathbb{Q}}
\newcommand{\Z}{\mathbb{Z}}
\newcommand{\R}{\mathbb{R}}
\newcommand{\C}{\mathbb{C}}
\renewcommand{\R}{\mathcal{R}} 
\renewcommand{\F}{\mathcal{F}} 
\newcommand{\mr}{\mathcal{M}_\mathcal{R}} 
\title{Fast Integer Multiplication Using Modular Arithmetic}
\author{Anindya De, Piyush P Kurur%
  \thanks{Research supported through Research I Foundation project
    NRNM/CS/20030163}, %
  Chandan Saha\\%
  Dept. of Computer Science and Engineering\\%
  Indian Institute of Technology, Kanpur\\%
  Kanpur, UP, India, 208016\\%
  {\tt \{anindya,ppk,csaha\}@cse.iitk.ac.in}%
  \and  Ramprasad Saptharishi%
  \thanks{Research done while visiting IIT Kanpur %
    under Project FLW/DST/CS/20060225}\\%
  Chennai Mathematical Institute\\%
  Plot H1, SIPCOT IT Park\\%
  Padur PO, Siruseri, India, 603103\\%
  and\\%
  Dept. of Computer Science and Engineering\\%
  Indian Institute of Technology, Kanpur\\%
  Kanpur, UP, India, 208016\\%
  {\tt ramprasad@cmi.ac.in}
}
\begin{document}
\maketitle
\begin{abstract}
We give an $O(N\cdot \log N\cdot 2^{O(\log^*N)})$ algorithm for
multiplying two $N$-bit integers that improves the $O(N\cdot \log
N\cdot \log\log N)$ algorithm by
Sch\"{o}nhage-Strassen~\cite{scho}. Both these algorithms use modular
arithmetic. Recently, F\"{u}rer~\cite{Furer} gave an $O(N\cdot \log
N\cdot 2^{O(\log^*N)})$ algorithm which however uses arithmetic over
complex numbers as opposed to modular arithmetic. In this paper, we
use multivariate polynomial multiplication along with ideas {}from
F\"{u}rer's algorithm to achieve this improvement in the modular
setting.  Our algorithm can also be viewed as a $p$-adic version of
F\"{u}rer's algorithm. Thus, we show that the two seemingly different
approaches to integer multiplication, modular and complex arithmetic,
are similar.
\end{abstract}

\section{Introduction}

Computing the product of two $N$-bit integers is an important problem
in algorithmic number theory and algebra. A naive approach leads to an
algorithm that uses $O(N^2)$ bit operations. Karatsuba \cite{Karatsuba} showed that some multiplication operations of
such an algorithm can be replaced by less costly addition operations
which reduces the overall running time of the algorithm to
$O(N^{\log_23})$ bit operations. Shortly afterwards this result was
improved by Toom \cite{Toom} who showed that for any $\varepsilon>0$,
integer multiplication can be done in $O(N^{1+\varepsilon})$
time. This led to the question as to whether the time complexity can
be improved further by replacing the term $O(N^{\epsilon})$ by a
poly-logarithmic factor. In a major breakthrough, Sch\"{o}nhage and
Strassen~\cite{scho} gave two efficient algorithms for multiplying
integers using fast polynomial multiplication. One of the algorithms
achieved a running time of $O(N\cdot \log N\cdot \log\log N\ldots
2^{O(\log^*N)})$ using arithmetic over complex numbers (approximated
to suitable precision), while the other used arithmetic modulo
carefully chosen integers to improve the complexity further to
$O(N\cdot \log N\cdot \log\log N)$. Despite many efforts, the modular
algorithm remained the best until a recent remarkable result by
F\"{u}rer \cite{Furer}. F\"{u}rer gave an algorithm that uses
arithmetic over complex numbers and runs in $O(N\cdot\log N\cdot
2^{O(\log^\ast N)})$ time. Till date this is the best time complexity
result known for integer multiplication.

Sch\"{o}nhage and Strassen introduced two seemingly different
approaches to integer multiplication -- using complex and modular
arithmetic. F\"{u}rer's algorithm improves the time complexity in the
complex arithmetic setting by cleverly reducing some costly
multiplications to simple shift. However, the algorithm
needs to approximate the complex numbers to certain precisions during
computation. This introduces the added task of bounding the total
truncation errors in the analysis of the algorithm. On the contrary,
in the modular setting the error analysis is virtually absent. In
addition, modular arithmetic gives a discrete approach to a discrete
problem like integer multiplication. Therefore it is natural to ask
whether we can achieve a similar improvement in time complexity of
this problem in the modular arithmetic setting. In this paper, we
answer this question affirmatively. We give an $O(N\cdot \log{N}\cdot
2^{O(\log^{*}{N})})$ algorithm for integer multiplication using
modular arithmetic, thus matching the improvement made by F\"{u}rer.

\subsection*{Overview of our result}

As is the case in both Sch\"{o}nhage-Strassen's and F\"{u}rer's
algorithms, we start by reducing the problem to polynomial
multiplication over a ring $\R$ by properly encoding the given
integers. Polynomials can be multiplied efficiently using Discrete
Fourier Transforms (DFT), which uses special roots of unity. For
instance, to multiply two polynomials of degree less than $M$ using
the Fourier transform, we require a \emph{principal} $2M$-th root of
unity (see Definition~\ref{def-principal-root} for \emph{principal
  root}). An efficient way of computing the DFT of a polynomial is
through the Fast Fourier Transform (FFT). In addition, if
multiplications by these roots are efficient, we get a faster
algorithm. Since multiplication by $2$ is a shift, it would be good to
have a ring with $2$ as a root of unity. One way to construct such a
ring in the modular setting is to consider rings of the form $\R =
\Z/(2^M + 1) \Z$ as is the case in Sch\"{o}nhage and
Strassen~\cite{scho}. However, this makes the size of $\R$ equal to
$2^M$, which although works in case of Sch\"{o}nhage and Strassen's
algorithm, is a little too large to handle in our case. We would
like to find a ring whose size is bounded by some polynomial in $M$
and which also contains a principal $2M$-th root of unity. In fact, it
is this choice of ring that poses the primary challenge in adapting
F\"{u}rer's algorithm and making it work in the discrete setting. In
order to overcome this hurdle we choose the ring to be $\R =
\Z/p^c\Z$, for a prime $p$ and a constant $c$ such that $p^c =
poly(M)$. The ring $\Z/p^c\Z$, has a principal $2M$-th root of unity
if and only if $2M$ divides $p-1$, which means that we need to search
for a prime $p$ in the arithmetic progression $\inbrace{1 + i\cdot
  2M}_{i>0}$. To make this search computationally efficient, we need
the degree of the polynomials $M$ to be sufficiently small compared to
the input size. It turns out that this can be achieved by considering
multivariate polynomials instead of univariate polynomials. We use
enough variables to make sure that the search for such a prime does
not affect the overall running time; the number of variables finally
chosen is a constant as well. In fact, the use of multivariate
polynomial multiplications and a small ring are the main steps where
our algorithm differs {}from earlier algorithms by
Sch\"{o}nhage-Strassen and F\"{u}rer.\\

The use of \emph{inner} and \emph{outer} DFT plays a central role in
both F\"{u}rer's as well as our algorithm. Towards understanding the
notion of inner and outer DFT in the context of multivariate
polynomials, we present a group theoretic interpretation of Discrete
Fourier Transform (DFT). Arguing along the line of F\"{u}rer
\cite{Furer} we show that repeated use of efficient computation of
inner DFT's using some special roots of unity in $\R$ makes the
overall process efficient and leads to an $O(N\cdot \log{N}\cdot
2^{O(\log^{*}{N})})$ time algorithm.

\section{The Ring, the Prime and the Root of Unity}\label{ring_section}

We work with the ring $\R = \Z[\alpha]/(p^c, \alpha^m + 1)$ for some
$m$, a constant $c$ and a prime $p$. Elements of $\R$ are thus $m-1$
degree polynomials over $\alpha$ with coefficients {}from $\Z/p^c\Z$. By
construction, $\alpha$ is a $2m$-th root of unity and multiplication
of any element in $\R$ by any power of $\alpha$ can be achieved by
shifting operations --- this property is crucial in making some
multiplications in the FFT less costly
(Section~\ref{fourier_analysis}).

Given an $N$-bit number $a$, we encode it as a $k$-variate polynomial
over $\R$ with degree in each variable less than $M$. The parameters
$M$ and $m$ are powers of two such that $M^k$ is roughly
$\frac{N}{\log^2N}$ and $m$ is roughly $\log{N}$. The parameter $k$
will ultimately be chosen a constant (see Section
\ref{complexity_section}). We now explain the details of this
encoding.

\subsection{Encoding Integers into multivariate
  Polynomials}\label{encoding_section}

Given an $N$-bit integer $a$, we first break these $N$ bits into $M^k$
blocks of roughly $\frac{N}{M^k}$ bits each. This corresponds to
representing $a$ in base $q = 2^{\frac{N}{M^k}}$.  Let $a = a_0 +
\ldots + a_{M^k-1}q^{M^k - 1}$ where $a_i < q$. The number $a$ is
converted into a polynomial as follows:
\begin{enumerate}
\item Express $i$ in base $M$ as $i = i_1 + i_2M + \cdots +
  i_kM^{k-1}$. \label{base_M_item}
\item Encode each term $a_iq^i$ as the monomial $a_i\cdot
  X_1^{i_1}\cdots X_k^{i_k}$. As a result, the number $a$
  gets converted to the polynomial $\sum_{i=0}^{M^k - 1}
  a_i\cdot X_1^{i_1}\cdots X_k^{i_k}$.
\end{enumerate}

Further, we break each $a_i$ into $\frac{m}{2}$ equal sized blocks
where the number of bits in each block is $u = \frac{2N}{M^k\cdot m}$.
Each coefficient $a_i$ is then encoded as polynomial in $\alpha$ of
degree less than $\frac{m}{2}$. The polynomials are then padded with
zeroes to stretch their degrees to $m$. Thus, the $N$-bit number $a$
is converted to a $k$-variate polynomial $a(X)$ over
$\Z[\alpha]/(\alpha^m + 1)$.\\

Given integers $a$ and $b$, each of $N$ bits, we encode them as
polynomials $a(X)$ and $b(X)$ and compute the product polynomial. The
product $a\cdot b$ can be recovered by substituting $X_s =
q^{M^{s-1}}$, for $1\leq s\leq k$, and $\alpha = 2^u$ in the
polynomial $a(X)\cdot b(X)$.  The coefficients in the product
polynomial could be as large as $M^k\cdot m\cdot 2^{2u}$ and hence it
is sufficient to do arithmetic modulo $p^c$ where $p^c > M^k\cdot
m\cdot 2^{2u}$. Therefore, $a(X)$ can indeed be considered as a
polynomial over $\R = \Z[\alpha]/(p^c, \alpha^m + 1)$. Our choice of
the prime $p$ ensures that $c$ is in fact a constant (see
Section~\ref{complexity_section}).

\subsection{Choosing the prime}\label{prime_section}

The prime $p$ should be chosen such that the ring $\Z/p^c\Z$ has a
\emph{principal} $2M$-th root of unity, which is required for
polynomial multiplication using FFT. A principal root of unity is
defined as follows.

\begin{definition}\label{def-principal-root}
An $n$-th root of unity $\zeta\in \R$ is said to be \emph{primitive}
if it generates a cyclic group of order $n$ under
multiplication. Furthermore, it is said to be \emph{principal} if $n$
is coprime to the characteristic of $\R$ and $\zeta$ satisfies
$\sum_{i=0}^{n-1}\zeta^{ij}=0$ for all $0< j < n$.
\end{definition}
\noindent
In $\Z/p^c\Z$, a $2M$-th root of unity is principal if and only if
$2M\mid p-1$ (see also Section~\ref{Qp_section}). As a result, we need
to choose the prime $p$ {}from the arithmetic progression $\inbrace{1 +
i\cdot 2M}_{i> 0}$, which is the main bottleneck of our approach.  We
now explain how this bottleneck can be circumvented.

An upper bound for the least prime in an arithmetic progression is
given by the following theorem \cite{Linnik}:

\begin{theorem}[Linnik]\label{linnik_theorem}
There exist absolute constants $\ell$ and $L$ such that for any pair
of coprime integers $d$ and $n$, the least prime $p$ such that
$p\equiv d\bmod{n}$ is less than $\ell n^L$.
\end{theorem}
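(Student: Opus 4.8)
This is Linnik's theorem, a deep result of analytic number theory; the plan is to sketch the classical route through the so‑called ``Linnik principles.'' It is equivalent to proving that the weighted prime‑counting function
\[
\psi(x;n,d)\;=\;\sum_{\substack{m\le x\\ m\equiv d\bmod n}}\Lambda(m)
\]
is positive --- in fact $\gg x/\phi(n)$ --- as soon as $x\ge \ell n^{L}$ for suitable absolute $\ell,L$. Indeed every $m$ with $\Lambda(m)\ne 0$ contributing to this sum is a prime power $\equiv d\bmod n$, and the prime powers that are not primes contribute only $O(\sqrt{x}\log x)$, which is swamped by a main term of size $x/\phi(n)$ once $L\ge 3$; hence positivity of $\psi(x;n,d)$ at $x=\ell n^{L}$ yields an honest prime $p\le \ell n^{L}$ with $p\equiv d\bmod n$.

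First I would use orthogonality of the Dirichlet characters modulo $n$ to write $\psi(x;n,d)=\frac{1}{\phi(n)}\sum_{\chi\bmod n}\bar\chi(d)\,\psi(x,\chi)$, where $\psi(x,\chi)=\sum_{m\le x}\chi(m)\Lambda(m)$, and then invoke the explicit formula, which for $\chi\ne\chi_0$ gives $\psi(x,\chi)=-\sum_{\rho}x^{\rho}/\rho$ up to negligible terms, the sum running over the non‑trivial zeros $\rho=\beta+i\gamma$ of $L(s,\chi)$; the principal character supplies the main term $x/\phi(n)$ (the zeros of $\zeta$ joining the remainder). Since $|x^{\rho}|=x^{\beta}$, it then suffices to bound $\frac{1}{\phi(n)}\bigl|\sum_{\chi\bmod n}\bar\chi(d)\sum_{\rho}x^{\rho}/\rho\bigr|$ by $(1-o(1))x/\phi(n)$ for $x=\ell n^{L}$, and this is entirely a question of how far the zeros of the $L(s,\chi)$, $\chi\bmod n$, are forced to the left of the line $\Re s=1$.

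Here I would bring in the three standard ingredients. (i) The classical zero‑free region: there is $c>0$ with no zero of any $L(s,\chi)$, $\chi\bmod n$, having $\beta>1-c/\log(n(|\gamma|+2))$, save possibly a single real (Siegel) zero $\beta_{1}$ of a single real character $\chi_{1}\bmod n$. (ii) A \emph{log‑free} zero‑density estimate $\sum_{\chi\bmod n}N(\sigma,T,\chi)\ll (nT)^{A(1-\sigma)}$, where $N(\sigma,T,\chi)$ counts zeros with $\beta\ge\sigma$ and $|\gamma|\le T$; the absence of any $\log n$ factor is precisely what makes a fixed power of $n$, rather than $n^{\log n}$, come out at the end. (iii) The Deuring--Heilbronn phenomenon: if the exceptional zero $\beta_{1}=1-\delta/\log n$ exists with $\delta$ small, then every other zero of every $L(s,\chi)$, $\chi\bmod n$, is repelled to $\beta<1-c\log(1/\delta)/\log(n(|\gamma|+2))$. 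Granting these, I would split the zero‑sum into the contribution of $\beta_{1}$ (if present) and the remainder, bound the remainder by partial summation against (ii) and the (possibly repulsion‑improved) region (i) to get $O(x^{1-\eta})$ for a fixed $\eta>0$ once $x\ge n^{L}$, and treat the single term $x^{\beta_{1}}/\beta_{1}$ case by case: it is absent when there is no Siegel zero, it \emph{helps} when $\chi_{1}(d)=-1$, and when $\chi_{1}(d)=+1$ it is controlled because $\delta$ cannot be too small (effective Page‑type lower bounds) while the $\log(1/\delta)$ gain in (iii) simultaneously shrinks the remainder --- balancing these two effects is exactly what pins down an admissible $L$. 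Choosing $L$ larger than all the implied constants finishes the proof.

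The main obstacle is establishing (ii) and (iii) with explicit, mutually compatible constants. The log‑free density estimate is the technical heart: naive zero‑density bounds carry a spurious $\log n$ in the exponent, and removing it requires the large sieve together with the Halász--Montgomery method applied to Dirichlet polynomials averaged over all characters modulo $n$. The quantitative Deuring--Heilbronn repulsion --- converting ``$\beta_{1}$ abnormally close to $1$'' into ``all other zeros abnormally far from $1$,'' with a usable $\log(1/\delta)$ saving --- is equally delicate, and is proved by feeding a carefully designed kernel into the explicit formula for products such as $L(s,\chi_{1})L(s,\chi\chi_{1})$. Everything else --- the orthogonality relation, the explicit formula, the classical zero‑free region, and the final partial summation --- is routine by comparison.
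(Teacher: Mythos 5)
The paper does not prove this statement; Linnik's theorem is quoted as a classical result with a citation to \cite{Linnik} and used as a black box, with Heath-Brown's bound $L\le 5.5$ \cite{Brown} invoked separately to fix the constants. There is therefore no in-paper proof to compare your argument against.

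As a freestanding account, your sketch correctly traces the classical argument: reduction to $\psi(x;n,d)\gg x/\phi(n)$ via orthogonality of Dirichlet characters and the explicit formula, followed by control of the zero-sum using the three Linnik principles --- the standard zero-free region allowing at most one exceptional real zero, a log-free zero-density estimate $\sum_{\chi\bmod n}N(\sigma,T,\chi)\ll (nT)^{A(1-\sigma)}$, and Deuring--Heilbronn repulsion. The sign analysis is also handled correctly: the exceptional term enters as $-\chi_1(d)x^{\beta_1}/(\phi(n)\beta_1)$, so it reinforces the main term when $\chi_1(d)=-1$ and must be beaten by repulsion combined with an effective lower bound on $1-\beta_1$ when $\chi_1(d)=+1$. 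Your identification of the log-free density bound and the quantitative repulsion as the technical heart is accurate; those are themselves substantial theorems (large sieve plus Hal\'asz--Montgomery for the former, explicit-formula kernel estimates for the latter), and your outline names them rather than proves them. What you have, then, is a correct and well-organized roadmap of the known proof rather than a self-contained argument --- which is a reasonable response to a theorem this deep, and consistent with the paper's own decision to treat it as standard.
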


Heath-Brown\cite{Brown} showed that the \emph{Linnik constant} $L\leq
5.5$. Recall that $M$ is chosen such that $M^k$ is
$\Theta\inparen{\frac{N}{\log^2N}}$. If we choose $k=1$, that is if we
use univariate polynomials to encode integers, then the parameter $M =
\Theta\inparen{\frac{N}{\log^2N}}$. Hence the least prime $p\equiv
1\pmod{2M}$ could be as large as $N^L$. Since all known deterministic
sieving procedures take at least $N^L$ time this is clearly infeasible
(for a randomized approach see Section~\ref{ERH_section}). However, by
choosing a larger $k$ we can ensure that the least prime $p\equiv
1\pmod{2M}$ is $O(N^\varepsilon)$ for some constant $\varepsilon < 1$.

\begin{remark}\label{prime_time}
If $k$ is any integer greater than $L+1$, then $M^L =
O\inparen{N^{\frac{L}{L+1}}}$ and hence the least prime $p\equiv
1\bmod{2M}$ can be found in $o(N)$ time.
\end{remark}

\subsection{The Root of Unity}\label{root_section}

We require a principal $2M$-th root of unity in $\R$ to compute the
Fourier transforms. This root $\rho(\alpha)$ should also have the
property that its $\inparen{\frac{M}{m}}$-th power is $\alpha$, so as
to make some multiplications in the FFT efficient
(Lemma~\ref{lem-Fmk}). Such a root can be computed by interpolation in
a way similar to that in F\"{u}rer's algorithm\cite[Section 3]{Furer},
but we briefly sketch the procedure for completeness.

We first obtain a $(p-1)$-th root of unity $\zeta$ in $\Z/p^c\Z$ by
lifting a generator of $\mathbb{F}_p^*$. The
$\inparen{\frac{p-1}{2M}}$-th power of $\zeta$ gives us a $2M$-th root
of unity $\omega$. A generator of $\mathbb{F}_p^*$ can be computed by
brute force, as $p$ is sufficiently small. Having obtained a
generator, we can use Hensel Lifting \cite[Theorem 2.23]{Zuckerman}.

\begin{lemma}
Let $\zeta_s$ be a primitive $(p-1)$-th root of unity in
$\Z/p^s\Z$. Then there exists a unique primitive $(p-1)$-th root of
unity $\zeta_{s+1}$ in $\Z/p^{s+1}\Z$ such that $\zeta_{s+1}\equiv
\zeta_s\pmod{p^s}$. This unique root is given by $\zeta_{s+1} =
\zeta_s - \frac{f(\zeta_s)}{f'(\zeta_s)}$ where $f(X) = X^{p-1} - 1$.
\end{lemma}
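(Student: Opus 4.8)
The plan is to prove this by a direct application of Hensel's lemma (Newton iteration) to the polynomial $f(X) = X^{p-1} - 1$ over the $p$-adic-type ring $\Z/p^{s+1}\Z$, exactly as in the classical statement \cite[Theorem 2.23]{Zuckerman}. First I would verify the hypothesis needed for the lifting: given $\zeta_s$ with $f(\zeta_s) \equiv 0 \pmod{p^s}$, one checks that $f'(\zeta_s) = (p-1)\zeta_s^{p-2}$ is a unit modulo $p$. Indeed $\zeta_s$ is a primitive $(p-1)$-th root of unity, hence invertible mod $p$, and $p-1$ is coprime to $p$; so $f'(\zeta_s) \not\equiv 0 \pmod p$ and in particular $\frac{f(\zeta_s)}{f'(\zeta_s)}$ makes sense in $\Z/p^{s+1}\Z$.

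Next I would carry out the existence and formula step. Set $\zeta_{s+1} := \zeta_s - f(\zeta_s)/f'(\zeta_s)$, where $f(\zeta_s)$ is lifted arbitrarily to $\Z/p^{s+1}\Z$ (the ambiguity is a multiple of $p^s$, and after dividing by the unit $f'(\zeta_s)$ it remains a multiple of $p^s$, so $\zeta_{s+1}$ is well-defined mod $p^{s+1}$). A Taylor expansion of $f$ around $\zeta_s$ gives $f(\zeta_{s+1}) = f(\zeta_s) + f'(\zeta_s)\cdot\bigl(-f(\zeta_s)/f'(\zeta_s)\bigr) + (\text{terms divisible by } (f(\zeta_s)/f'(\zeta_s))^2)$; the first two terms cancel, and since $f(\zeta_s) \equiv 0 \pmod{p^s}$ the quadratic tail is divisible by $p^{2s}$, which is $0$ in $\Z/p^{s+1}\Z$ as soon as $2s \geq s+1$, i.e. $s \geq 1$. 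Hence $f(\zeta_{s+1}) \equiv 0 \pmod{p^{s+1}}$. By construction $\zeta_{s+1} \equiv \zeta_s \pmod{p^s}$, so $\zeta_{s+1}$ is a $(p-1)$-th root of unity reducing to $\zeta_s$; it is moreover primitive because its reduction mod $p$ equals that of $\zeta_s$, which already has exact order $p-1$ in $\F_p^*$, and the order of an element can only be at least that of any quotient image.

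For uniqueness, suppose $\eta$ is another primitive $(p-1)$-th root of unity in $\Z/p^{s+1}\Z$ with $\eta \equiv \zeta_s \pmod{p^s}$. Write $\eta = \zeta_{s+1} + t p^s$ for some $t \in \Z/p\Z$. Expanding $f(\eta) = f(\zeta_{s+1}) + f'(\zeta_{s+1}) t p^s + O(p^{2s})$ and using $f(\eta) = f(\zeta_{s+1}) = 0$ in $\Z/p^{s+1}\Z$ together with $2s \geq s+1$, we get $f'(\zeta_{s+1}) t p^s \equiv 0 \pmod{p^{s+1}}$; since $f'(\zeta_{s+1})$ is a unit mod $p$ (its reduction agrees with $f'(\zeta_s)$ mod $p$), this forces $t \equiv 0 \pmod p$, so $\eta = \zeta_{s+1}$.

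The only mildly delicate point — really the crux — is the bookkeeping about the two different senses in which expressions live "mod $p^s$" versus "mod $p^{s+1}$": one must check that $\zeta_{s+1}$ does not depend on the chosen lift of $f(\zeta_s)$, and that the quadratic remainder in Newton's iteration genuinely vanishes in $\Z/p^{s+1}\Z$, both of which reduce to the inequality $2s \geq s+1$ for $s \geq 1$. Everything else is a routine instance of the standard Hensel lifting argument, so I would keep the proof short and simply cite \cite[Theorem 2.23]{Zuckerman} for the general mechanism while spelling out the verification that $f'$ is a unit and that primitivity is preserved.
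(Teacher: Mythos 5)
Your proof is correct, and it matches the paper's approach exactly: the paper gives no proof of this lemma at all, merely stating it as an instance of Hensel Lifting with a citation to \cite[Theorem 2.23]{Zuckerman}, and your argument is precisely the standard Hensel/Newton-iteration lifting for $f(X)=X^{p-1}-1$ (unit derivative since $\gcd(p-1,p)=1$ and $\zeta_s$ is a unit, quadratic tail killed by $2s\geq s+1$, uniqueness from the unit derivative) together with the easy observation that the lift remains primitive because its image mod $p$ already has order $p-1$.
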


We need the following claims to compute the root $\rho(\alpha)$.

\begin{claim}\label{claim_for_rho} Let $\omega$ be a principal $2M$-th
  root of unity in
  $\Z/p^c\Z$.
  \begin{enumerate}
  \item[(a)] If $\sigma = \omega^{\frac{M}{m}}$, then $\sigma$ is a
    principal $2m$-th root of unity.
  \item[(b)] The polynomial $x^m+1 = \prod_{i=1}^{m}(x -
    \sigma^{2i-1})$ in $\Z/p^c\Z$. Moreover, for any $0\leq i<j\leq
    2m$, the ideals generated by $(x - \sigma^i)$ and $(x - \sigma^j)$
    are comaximal in $\Z[x]/p^c\Z$.
  \item[(c)] The roots $\inbrace{\sigma^{2i-1}}_{1\leq i \leq m}$ are
    distinct modulo $p$ and therefore the difference of any two of
    them is a unit in $\Z[x]/p^c\Z$.
  \end{enumerate}
\end{claim}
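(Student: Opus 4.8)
The plan is to derive all three parts from one computation --- the multiplicative order of $\sigma$ modulo $p$ --- together with the fact that $\Z/p^c\Z$ is a local ring with maximal ideal $(p)$, so that an element is a unit iff it is nonzero modulo $p$ and, in particular, two elements have a unit difference iff they are distinct modulo $p$ (this persists for constants viewed inside $\Z[x]/(p^c)$). First I would note that $\omega$ being a \emph{principal} $2M$-th root of unity forces $\omega\bmod p$ to have order exactly $2M$ in $\mathbb{F}_p^*$: if $\omega^j\equiv1\pmod p$ for some $0<j<2M$, the principal identity $\sum_{i=0}^{2M-1}\omega^{ij}=0$ reduces modulo $p$ to $2M\not\equiv0$, a contradiction. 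Since $m\mid M$ we have $M/m\mid M\mid 2M\mid p-1$, whence $\sigma\bmod p=(\omega\bmod p)^{M/m}$ has order $2M/\gcd(M/m,2M)=2m$. Thus $\sigma^a\equiv\sigma^b\pmod p$ exactly when $a\equiv b\pmod{2m}$; equivalently, $\sigma^a-\sigma^b$ is a unit in $\Z/p^c\Z$ precisely when $a\not\equiv b\pmod{2m}$, and $\sigma^k-1$ is a unit precisely when $2m\nmid k$.

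For part (a): plainly $\sigma^{2m}=\omega^{2M}=1$, and $2m\mid 2M\mid p-1$ is coprime to the characteristic $p$. For $0<j<2m$ the element $\sigma^j-1$ is a unit, so summing the geometric series gives $\sum_{i=0}^{2m-1}\sigma^{ij}=(\sigma^{2mj}-1)(\sigma^j-1)^{-1}=0$ since $\sigma^{2mj}=1$; this is exactly the defining condition of a principal $2m$-th root of unity (and $\sigma^j\neq1$ for $0<j<2m$ additionally shows $\sigma$ has order exactly $2m$). For part (c): the odd powers $\sigma^{2i-1}$, $1\le i\le m$, are pairwise distinct modulo $p$ because $\sigma\bmod p$ has order $2m$ and the exponents $1,3,\dots,2m-1$ are pairwise incongruent modulo $2m$; hence, by the local-ring remark, the difference of any two of them is a unit in $\Z/p^c\Z$, and therefore in $\Z[x]/(p^c)$.

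For part (b): $\sigma^m$ is a square root of $1$ in the local ring $\Z/p^c\Z$ with $p$ odd, so $\sigma^m=\pm1$, and $\sigma^m\neq1$ (as $0<m<2m$) gives $\sigma^m=-1$; hence $(\sigma^{2i-1})^m=(\sigma^m)^{2i-1}=-1$, so each $\sigma^{2i-1}$ is a root of $x^m+1$. The delicate point --- which I expect to be the main obstacle --- is that $\Z/p^c\Z$ is not an integral domain, so the usual step ``a monic polynomial of degree $m$ with $m$ roots is the product of the corresponding linear factors'' is \emph{not} automatic. To finish I would set $h(x)=(x^m+1)-\prod_{i=1}^m(x-\sigma^{2i-1})$, a polynomial of degree $\le m-1$ vanishing at $\sigma^1,\sigma^3,\dots,\sigma^{2m-1}$, and divide $h$ successively by the monic factors $x-\sigma^1,x-\sigma^3,\dots$: at each step, evaluating the current quotient at the next root and using that the relevant difference of roots is a unit (part (c)) shows the quotient vanishes there, keeping each division exact. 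After removing $m-1$ factors, $h$ equals a constant $c_0$ times a product of $m-1$ linear polynomials; evaluating at the remaining root $\sigma^{2m-1}$ and using that all $m-1$ differences $\sigma^{2m-1}-\sigma^{2i-1}$ are units forces $c_0=0$, so $h=0$ and $x^m+1=\prod_{i=1}^m(x-\sigma^{2i-1})$. Finally, for $0\le i<j\le 2m$ with $\sigma^i\neq\sigma^j$ (i.e.\ excluding the trivial case $(i,j)=(0,2m)$, where the two ideals coincide), the unit $\sigma^j-\sigma^i=(x-\sigma^i)-(x-\sigma^j)$ lies in the ideal $(x-\sigma^i)+(x-\sigma^j)$, so this ideal is all of $\Z[x]/(p^c)$ and the two ideals are comaximal.
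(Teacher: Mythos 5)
Your proof is correct. Note that the paper states Claim~\ref{claim_for_rho} without proof (it only cites F\"urer's construction for the surrounding root computation), so there is no in-paper argument to compare against; your write-up is a valid and essentially complete filling-in of the omitted verification. The two points that genuinely need care are exactly the ones you isolate: (i) extracting the order of $\sigma$ modulo $p$ from the \emph{principal} (not merely primitive) hypothesis via $\sum_i \omega^{ij}=0$ reduced mod $p$, which then feeds the local-ring observation that distinctness mod $p$ is equivalent to the difference being a unit in $\Z/p^c\Z$; and (ii) the fact that $\Z/p^c\Z$ is not a domain, so the factorization $x^m+1=\prod_i(x-\sigma^{2i-1})$ requires the successive-division argument using that the pairwise differences of the roots are units, rather than a naive root count. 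Your handling of the boundary pair $(i,j)=(0,2m)$, where the two ideals coincide and comaximality fails as literally stated, is the right reading: that pair never occurs in the Chinese remaindering actually used, which involves only the exponents $1,3,\dots,2m-1$.
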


We then, through interpolation, solve for a polynomial $\rho(\alpha)$
such that $\rho(\sigma^{2i+1}) = \omega^{2i+1}$ for all $1\leq i \leq
m$. Then,
\begin{eqnarray*}
  \rho(\sigma^{2i+1}) & = & \omega^{2i+1}\quad 1\leq i \leq m\\
  \implies   \inparen{\rho(\sigma^{2i+1})}^{M/m} & = &
  \omega^{(2i+1)M/m} = \sigma^{2i+1}\\
  \implies \inparen{\rho(\alpha)}^{M/m} & = & \alpha \pmod{\alpha -
    \sigma^{2i+1}}\quad 1\leq i \leq m\\
  \implies \inparen{\rho(\alpha)}^{M/m} & = & \alpha \pmod{ \alpha^m +
    1}
\end{eqnarray*}
The first two parts of the claim justify the Chinese
Remaindering. Finally, computing a polynomial $\rho(\alpha)$ such that
$\rho(\sigma^{2i+1}) = \omega^{2i+1}$ can be done through
interpolation.
$$
\rho(\alpha) = \sum_{i=1}^{m}\omega^{2i+1}\frac{\prod_{j\neq i}(\alpha
  - \sigma^{2j+1})}{\prod_{j\neq i}(\sigma^{2i+1} - \sigma^{2j+1})}
$$ The division by $(\sigma^{2i+1} - \sigma^{2j+1})$ is justified as
it is a unit in $\Z/p^c\Z$ (part (c) of Claim~\ref{claim_for_rho}).

\section{The Integer Multiplication Algorithm}\label{intmult_section}

We are given two integers $a,b< 2^N$ to multiply. We fix constants $k$
and $c$  whose values are given in
Section~\ref{complexity_section}. The algorithm is as follows:
\begin{enumerate}
\item Choose $M$ and $m$ as powers of two such that $M^k \approx
  \frac{N}{\log^2N}$ and $m \approx \log N$. Find the
  least prime $p\equiv 1\pmod{2M}$ (Remark~\ref{prime_time}).
\item Encode the integers $a$ and $b$ as $k$-variate polynomials
  $a(X)$ and $b(X)$ respectively over the ring $\R = \Z[\alpha]/(p^c,
  \alpha^m + 1)$ (Section~\ref{encoding_section}).
\item Compute the root $\rho(\alpha)$ (Section~\ref{root_section}).
\item Use $\rho(\alpha)$ as the principal $2M$-th root of unity to
  compute the Fourier transforms of the $k$-variate polynomials $a(X)$
  and $b(X)$. Multiply component-wise and take the inverse Fourier
  transform to obtain the product polynomial.
\item Evaluate the product polynomial at appropriate powers of two to
  recover the integer product and return it
  (Section~\ref{encoding_section}).
\end{enumerate}

The only missing piece is the Fourier transforms for multivariate
polynomials. The following section gives a group theoretic description
of FFT.

\section{Fourier Transform}\label{sect-FFT}

A convenient way to study polynomial multiplication is to interpret it
as multiplication in a \emph{group algebra}.

\begin{definition}[Group Algebra]
  Let $G$ be a group. The \emph{group algebra} of $G$ over a ring $R$
  is the set of formal sums $\sum_{g \in G} \alpha_g g$ where
  $\alpha_g \in R$ with addition defined point-wise and multiplication
  defined via convolution as follows
  $$ \left(\sum_g \alpha_g g\right) \left(\sum_h
  \beta_h h\right) = \sum_{u}\inparen{\sum_{gh=u} \alpha_g \beta_h}u $$
\end{definition}

Multiplying univariate polynomials over $R$ of degree less than $n$
can be seen as multiplication in the group algebra $R[G]$ where $G$ is
the cyclic group of order $2n$. Similarly, multiplying $k$-variate
polynomials of degree less than $n$ in each variable can be seen as
multiplying in the group algebra $R[G^k]$, where $G^k$ denotes the
$k$-fold product group $G\times\ldots \times G$.

In this section, we study the Fourier transform over the group algebra
$R[E]$ where $E$ is an \emph{additive abelian group}. Most of this,
albeit in a different form, is well known but is provided here for
completeness.\cite[Chapter 17]{Igor}

In order to simplify our presentation, we will fix the base ring to be
$\C$, the field of complex numbers. Let $n$ be the \emph{exponent} of
$E$, that is the maximum order of any element in $E$. A similar
approach can be followed for any other base ring as long as it has a
principal $n$-th root of unity.

We consider $\C[E]$ as a vector space with basis $\{ x \}_{x \in E}$
and use the Dirac notation to represent elements of $\C[E]$ --- the
vector $\ket{x}$, $x$ in $E$, denotes the element $1 . x$ of $\C[E]$.

\begin{definition}[Characters]
Let $E$ be an additive abelian group. A \emph{character} of $E$ is a
homomorphism {}from $E$ to $\C^*$.
\end{definition}

An example of a character of $E$ is the trivial character, which we
will denote by $1$, that assigns to every element of $E$ the complex
number $1$. If $\chi_1$ and $\chi_2$ are two characters of $E$ then
their product $\chi_1 . \chi_2$ is defined as $\chi_1 . \chi_2(x) =
\chi_1(x) \chi_2(x)$.

\begin{proposition}\cite[Chapter 17, Theorem 1]{Igor}
  Let $E$ be an additive abelian group of exponent $n$. Then the
  values taken by any character of $E$ are $n$-th roots of
  unity. Furthermore, the characters form a \emph{multiplicative
    abelian group} $\hat{E}$ which is isomorphic to $E$.
\end{proposition}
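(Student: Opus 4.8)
The plan is to split the statement into three parts and handle each in turn. First I would show that the values of a character are $n$-th roots of unity: if $\chi$ is a character and $x\in E$ has order $d$, then $d\mid n$ and $\chi(x)^d = \chi(dx) = \chi(0) = 1$, so $\chi(x)$ is a $d$-th root of unity, hence an $n$-th root of unity. This immediately shows $\hat E$ is a subset of the set of functions $E\to\mu_n$, where $\mu_n\subset\C^*$ is the cyclic group of $n$-th roots of unity. Second, I would check $\hat E$ is a group under pointwise multiplication: the trivial character $1$ is the identity, the product of two homomorphisms $E\to\C^*$ is again a homomorphism (since $\C^*$ is abelian), and the pointwise inverse $\chi^{-1}(x) = \chi(x)^{-1} = \overline{\chi(x)}$ is a homomorphism; associativity and commutativity are inherited from $\C^*$. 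So $\hat E$ is a multiplicative abelian group.

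The substantive part is the isomorphism $\hat E\cong E$. I would use the structure theorem for finite abelian groups to write $E\cong \Z/n_1\Z \times \cdots \times \Z/n_r\Z$ with $n_r = n$ the exponent. The key observations are: (i) characters of a direct product $E_1\times E_2$ are exactly products $\chi_1\cdot\chi_2$ of characters of the factors, so $\widehat{E_1\times E_2}\cong \hat E_1\times \hat E_2$; and (ii) for a single cyclic group $\Z/d\Z$, a character is determined by where it sends the generator $1$, and that image can be any $d$-th root of unity, so $\widehat{\Z/d\Z}\cong \mu_d\cong \Z/d\Z$. Here it is essential that $\C$ actually contains a primitive $d$-th root of unity for each $d\mid n$ — this is where the base ring $\C$ (or, in the general statement, the hypothesis of a principal $n$-th root of unity) is used. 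Combining (i) and (ii) over the cyclic factors gives $\hat E\cong \prod_i \Z/n_i\Z \cong E$.

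The main obstacle — really the only place care is needed — is part (ii) and its dependence on the ground ring: one must verify that every candidate assignment of a $d$-th root of unity to a generator genuinely extends to a well-defined homomorphism (it does, precisely because the assigned value satisfies $\zeta^d=1$), and that distinct assignments give distinct characters (clear, since a character is determined by its value on the generator). The decomposition step (i) is routine bookkeeping. I would also remark that the isomorphism $\hat E\cong E$ produced this way is non-canonical (it depends on the chosen decomposition and on chosen roots of unity), which is harmless for our purposes since we only need the abstract isomorphism — in particular $|\hat E| = |E|$, which is what later sections will use when counting characters for the Fourier transform.
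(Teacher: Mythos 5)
The paper itself gives no proof of this proposition---it is stated as a citation to \cite{Igor}---so there is no in-paper argument to compare yours against. Your proof is the standard textbook argument (reduce to cyclic groups via the structure theorem, observe $\widehat{\Z/d\Z}\cong\Z/d\Z$ because $\C$ has a primitive $d$-th root of unity, and use that the dual of a product is the product of the duals), and it is correct. One remark worth making explicit: the isomorphism $\hat{E}\cong E$ genuinely requires $E$ to be \emph{finite}, not merely of finite exponent --- an infinite direct sum of copies of $\Z/2\Z$ has exponent $2$ but its character group is the (uncountable) direct product, so the conclusion fails. You correctly invoke the structure theorem for \emph{finite} abelian groups, which presupposes finiteness, and in the paper's application $E=(\Z/2M\Z)^k$ is finite, so there is no gap; but since the proposition as written omits the word ``finite,'' it is worth flagging that this hypothesis is doing real work in your argument.
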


An important property that the characters satisfy is the following
\cite[Corollary 2.14]{Isaacs}.

\begin{proposition}[Schur's Orthogonality]
  Let $E$ be an additive abelian group. Then
  \[ \sum_{x \in E} \chi(x) =%
  \begin{cases}
    0 & \textrm{ if $\chi \neq 1$,}\\
    \# E &\textrm{ otherwise}
  \end{cases}
  \]
  \[
  \sum_{\chi \in \hat{E}} \chi(x) =%
  \begin{cases}
    0 & \textrm{ if $x \neq 0$,}\\
    \# E &\textrm{ otherwise.}
  \end{cases}
  \]
\end{proposition}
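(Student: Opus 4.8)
The plan is to prove the two identities separately, each by the same elementary trick: multiply the sum by a cleverly chosen term that the sum is "invariant" under, and conclude that the sum must be zero unless that term acts trivially. First consider $\sum_{x\in E}\chi(x)$ for a fixed character $\chi$. If $\chi=1$ is the trivial character, every summand is $1$ and the sum is clearly $\#E$. If $\chi\neq 1$, then by definition there exists some $y\in E$ with $\chi(y)\neq 1$. The key observation is that as $x$ ranges over all of $E$, so does $x+y$ (translation by $y$ is a bijection of $E$), and therefore $\sum_{x\in E}\chi(x)=\sum_{x\in E}\chi(x+y)=\chi(y)\sum_{x\in E}\chi(x)$, using that $\chi$ is a homomorphism. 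Hence $(1-\chi(y))\sum_{x\in E}\chi(x)=0$, and since $\chi(y)\neq 1$ and we are working over $\C$ (an integral domain), we conclude $\sum_{x\in E}\chi(x)=0$.

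For the second identity, $\sum_{\chi\in\hat E}\chi(x)$ for a fixed $x\in E$, I would run the dual argument using the group structure of $\hat E$ established in the preceding proposition. If $x=0$, then $\chi(0)=1$ for every character $\chi$ (homomorphisms send identity to identity), so the sum is $\#\hat E = \#E$ by the isomorphism $\hat E\cong E$. If $x\neq 0$, I need a character $\psi$ with $\psi(x)\neq 1$; granting this, the same manipulation works: as $\chi$ ranges over $\hat E$, so does $\psi\chi$, so $\sum_{\chi\in\hat E}\chi(x)=\sum_{\chi\in\hat E}(\psi\chi)(x)=\psi(x)\sum_{\chi\in\hat E}\chi(x)$, forcing the sum to vanish.

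The main obstacle is the existence of a character $\psi$ with $\psi(x)\neq 1$ when $x\neq 0$ — i.e., that characters separate points of $E$. This does not follow formally from the bare statement $\hat E\cong E$; it requires knowing that the isomorphism is compatible with evaluation, or equivalently that the natural map $E\to\widehat{\widehat E}$ is injective. I would handle this by invoking the structure theorem for finite abelian groups: write $E\cong \Z/n_1\Z\times\cdots\times\Z/n_r\Z$, observe that on each cyclic factor $\Z/n_j\Z$ the map $t\mapsto e^{2\pi i t/n_j}$ is a character that is nontrivial on any nonzero element, and take $\psi$ to be the appropriate coordinate character; for $x\neq 0$ some coordinate $x_j$ is nonzero and the corresponding $\psi$ satisfies $\psi(x)\neq 1$. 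Alternatively, since the paper cites \cite[Chapter 17]{Igor} for the character theory, this separation property can simply be quoted from there. Everything else is a two-line computation, so I would keep the write-up short and spend the words on the separation point.
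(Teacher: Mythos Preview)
Your proof is correct and is the standard argument. However, the paper does not actually prove this proposition: it simply states it with a citation to \cite[Corollary 2.14]{Isaacs}, so there is no in-paper proof to compare against. Your handling of the separation-of-points issue (either via the structure theorem or by citing the character-theory reference) is appropriate and is exactly the kind of detail a careful write-up should address.
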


It follows {}from Schur's orthogonality that the collection of vectors
$\ket{\chi} = \sum_x \chi(x) \ket{x}$ forms a basis of $\C[E]$. We
will call this basis the \emph{Fourier basis} of $\C[E]$.

\begin{definition}[Fourier Transform]
Let $E$ be an additive abelian group and let $x \mapsto \chi_x$ be an
isomorphism between $E$ and $\hat{E}$. The \emph{Fourier transform}
over $E$ is the linear map {}from $\C[E]$ to $\C[E]$ that sends
$\ket{x}$ to $\ket{\chi_x}$.
\end{definition}

Thus, the Fourier transform is a change of basis {}from the point basis
$\{ \ket{x} \}_{x \in E}$ to the Fourier basis $\{
\ket{\chi_x}\}_{x\in E}$. The Fourier transform is unique only up to
the choice of the isomorphism $x \mapsto \chi_x$. This isomorphism is
determined by the choice of the principal root of unity.

\begin{remark}\label{rem-Fourier-inner}
Given an element $\ket{f} \in \C[E]$, to compute its Fourier transform
it is sufficient to compute the \emph{Fourier coefficients}
$\{\braket{\chi}{f} \}_{\chi \in \hat{E}}$.
\end{remark}

\subsection{Fast Fourier Transform}

We now describe the Fast Fourier Transform for general abelian groups
in the character theoretic setting. For the rest of the section fix an
additive abelian group $E$ over which we would like to compute the
Fourier transform. Let $A$ be any subgroup of $E$ and let $B =
E/A$. For any such pair of abelian groups $A$ and $B$, we have an
appropriate Fast Fourier transformation, which we describe in the rest
of the section.

\begin{proposition}\label{prop-character-lift}
  \begin{enumerate}
  \item Every character $\lambda$ of $B$ can be ``lifted'' to a
    character of $E$ (which will also be denoted by $\lambda$) defined
    as follows $\lambda(x) = \lambda(x + A)$.
  \item Let $\chi_1$ and $\chi_2$ be two characters of $E$ that when
    restricted to $A$ are identical. Then $\chi_1 = \chi_2 \lambda$ for
    some character $\lambda$ of $B$.
  \item The group $\hat{B}$ is (isomorphic to) a subgroup of $\hat{E}$
    with the quotient group $\hat{E}/\hat{B}$ being (isomorphic to)
    $\hat{A}$.
  \end{enumerate}
\end{proposition}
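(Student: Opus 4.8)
The plan is to prove the three parts in order, since each feeds into the next, and the whole statement is really an exercise in the duality between subgroups/quotients of $E$ and of $\hat{E}$.

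For part (1), I would simply check that the proposed formula $\lambda(x) = \lambda(x+A)$ is well-defined and a homomorphism. Well-definedness is immediate because the value depends only on the coset $x + A \in B$; and it is a homomorphism into $\C^*$ because the quotient map $E \to B$ is a homomorphism and $\lambda$ is a character of $B$, so the composition is a character of $E$. Note that under this lift, $\lambda$ restricted to $A$ is the trivial character of $A$, a fact I will reuse in part (2). Conversely, any character of $E$ that is trivial on $A$ factors through $B$, so the lifted characters are exactly the characters of $E$ trivial on $A$; this identifies $\hat{B}$ with a subgroup of $\hat{E}$ and already gives half of part (3).

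For part (2), suppose $\chi_1, \chi_2 \in \hat{E}$ agree on $A$. Consider $\lambda := \chi_1 \chi_2^{-1}$ (using that $\hat{E}$ is a group, and that character values are roots of unity hence invertible in $\C^*$). Then $\lambda$ is a character of $E$ which is trivial on $A$, since $\chi_1(a) = \chi_2(a)$ for $a \in A$. By the converse observation from part (1), $\lambda$ descends to a character of $B$, and $\chi_1 = \chi_2 \lambda$, as required.

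For part (3), from part (1) we have an injection $\hat{B} \hookrightarrow \hat{E}$ whose image is $\{\chi \in \hat{E} : \chi|_A = 1\}$. Part (2) says that two characters of $E$ have the same restriction to $A$ iff they differ by an element of (the image of) $\hat{B}$, i.e. the fibers of the restriction map $\mathrm{res}\colon \hat{E} \to \hat{A}$, $\chi \mapsto \chi|_A$, are exactly the cosets of $\hat{B}$. Since $\mathrm{res}$ is a group homomorphism with kernel $\hat{B}$, it follows that $\hat{E}/\hat{B} \cong \mathrm{im}(\mathrm{res}) \le \hat{A}$. To finish I must argue $\mathrm{res}$ is surjective: every character of $A$ extends to a character of $E$. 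This is the one nonformal point — the main obstacle — but it follows from the structure theory already invoked in the excerpt: $\hat{E} \cong E$ and $\hat{A} \cong A$ with $\#\hat{B} = \#\hat{E}/\#\hat{A} = \#E/\#A$, so by counting, $\mathrm{im}(\mathrm{res})$ has order $\#\hat{E}/\#\hat{B} = \#A = \#\hat{A}$, forcing $\mathrm{res}$ to be onto. (Alternatively one extends characters one cyclic generator at a time, choosing an appropriate root of unity as its image, which is possible since $\C^*$ is divisible.) Hence $\hat{E}/\hat{B} \cong \hat{A}$, completing the proof.
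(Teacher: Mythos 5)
The paper states Proposition~\ref{prop-character-lift} without proof, treating it as standard finite-abelian-group character theory, so there is no paper argument to compare against; your proof is correct and fills in exactly what was left to the reader. Parts (1) and (2) are clean: (1) is composition with the quotient map, and your observation that the lifted characters are precisely those trivial on $A$ is the right organizing fact; (2) is the usual ``take the quotient $\chi_1\chi_2^{-1}$ and note it kills $A$'' argument. In part (3) you correctly recast the statement as: $\hat{B}$ is the kernel of the restriction map $\mathrm{res}\colon\hat{E}\to\hat{A}$, and the only real content is surjectivity of $\mathrm{res}$. Your counting argument works, but the chain ``$\#\hat{B}=\#\hat{E}/\#\hat{A}=\#E/\#A$'' reads as though the first equality were given, when it is really a consequence of $\hat{B}\cong B$ (the cited isomorphism $\hat{G}\cong G$ applied to $B$) plus Lagrange; writing $\#\hat{B}=\#B=\#E/\#A$ would make the dependence explicit and avoid the appearance of circularity. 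Your parenthetical alternative --- extend a character of $A$ step by step through a chain of subgroups with cyclic quotients, using that $\C^*$ is divisible --- is arguably the cleaner route, since it proves surjectivity directly without leaning on the nontrivial fact that a finite abelian group is isomorphic to its dual, and in fact that extension lemma is what underlies the $\hat{G}\cong G$ result in the first place.
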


We now consider the task of computing the Fourier transform of an
element $\ket{f} = \sum f_x \ket{x}$ presented as a list of
coefficients $\{f_x\}$ in the point basis. For this, it is sufficient
to compute the Fourier coefficients $\{\braket{\chi}{f}\}$ for each
character $\chi$ of $E$ (Remark~\ref{rem-Fourier-inner}). To describe
the Fast Fourier transform we fix two sets of cosets representatives,
one of $A$ in $E$ and one of $\hat{B}$ in $\hat{E}$ as follows.

\begin{enumerate}
  \item For each $b \in B$, $b$ being a coset of $A$, fix a coset
    representative $x_b \in E$ such $b = x_b + A$.
  \item For each character $\varphi$ of $A$, fix a character
    $\chi_\varphi$ of $E$ such that $\chi_\varphi$ restricted to $A$ is
    the character $\varphi$. The characters $\{ \chi_\varphi \}$ form
    (can be thought of as) a set of coset representatives of $\hat{B}$
    in $\hat{E}$.
\end{enumerate}

Since $\{ x_b \}_{b \in B}$ forms a set of coset representatives, any
$\ket{f} \in \C[E]$ can be written uniquely as $\ket{f} = \sum f_{b,a}
\ket{x_b + a}$.

\begin{proposition}\label{prop-Fourier-coefficient}
  Let $\ket{f} = \sum f_{b,a}\ket{x_b + a}$ be an element of $\C[E]$.
  For each $b \in B$ and $\varphi \in \hat{A}$ let $\ket{f_b}\in
  \C[A]$ and $\ket{f_\varphi} \in \C[B]$ be defined as
  follows.
  \begin{eqnarray*}
    \ket{f_b} &= & \sum_{a \in A} f_{b,a} \ket {a}\\
    \ket{f_\varphi} & = &\sum_{b \in B} \overline{\chi}_{\varphi}(x_b)
    \braket{\varphi}{f_b} \ket{b}
  \end{eqnarray*}
  Then for any character $\chi$ of $E$, which can be expressed as
  $\chi = \lambda\cdot\chi_\varphi$, the Fourier coefficient
  $\braket{\chi}{f} = \braket{\lambda}{f_\varphi}$.
\end{proposition}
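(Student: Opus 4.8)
The plan is to unwind the definition of the Fourier coefficient $\braket{\chi}{f}$ and to regroup the sum over $E$ along the decomposition of $E$ into the cosets $x_b + A$. First I would write
$\braket{\chi}{f} = \sum_{x\in E}\overline{\chi(x)}\,f_x = \sum_{b\in B}\sum_{a\in A}\overline{\chi(x_b+a)}\,f_{b,a}$,
using that every $x\in E$ can be written uniquely as $x_b + a$ with $b\in B$ and $a\in A$, and $\ket{f} = \sum f_{b,a}\ket{x_b+a}$.

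Next comes the key algebraic step. Given that $\chi$ is expressed as $\chi = \lambda\cdot\chi_\varphi$ with $\lambda$ a character of $B$ lifted to $E$ and $\varphi\in\hat A$, I would evaluate $\chi(x_b + a)$. Since $\chi_\varphi$ is a character, its value factors as $\chi_\varphi(x_b + a) = \chi_\varphi(x_b)\,\chi_\varphi(a) = \chi_\varphi(x_b)\,\varphi(a)$, the last equality because $\chi_\varphi$ restricted to $A$ is $\varphi$. Since $\lambda$ is the lift of a character of $B = E/A$, it is constant on the coset $x_b + A$, so $\lambda(x_b + a) = \lambda(b)$. Hence $\chi(x_b + a) = \lambda(b)\,\chi_\varphi(x_b)\,\varphi(a)$, whose dependence on $a$ sits entirely in the factor $\varphi(a)$; this is exactly what lets the double sum separate.

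Substituting and pulling the factors that do not involve $a$ out of the inner sum gives
$\braket{\chi}{f} = \sum_{b\in B}\overline{\lambda(b)}\,\overline{\chi_\varphi(x_b)}\inparen{\sum_{a\in A}\overline{\varphi(a)}\,f_{b,a}}$.
I would then recognize the inner sum as $\braket{\varphi}{f_b}$ for $\ket{f_b} = \sum_{a}f_{b,a}\ket{a}$, and the whole right-hand side as $\braket{\lambda}{f_\varphi}$ for $\ket{f_\varphi} = \sum_{b}\overline{\chi}_\varphi(x_b)\,\braket{\varphi}{f_b}\ket{b}$, which is precisely the claimed identity.

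The argument is essentially bookkeeping, and I do not expect a genuine obstacle. The two points that need care are: that $\chi$ really does decompose as $\lambda\cdot\chi_\varphi$ with $\varphi$ the restriction of $\chi$ to $A$ and $\lambda$ then uniquely determined, which is Proposition~\ref{prop-character-lift}(2) together with the fact that the $\chi_\varphi$ form a set of coset representatives of $\hat B$ in $\hat E$; and keeping the complex-conjugation convention for $\braket{\cdot}{\cdot}$ consistent throughout, since it is the bra that produces the conjugates $\overline{\chi(x)}$, and hence the bars that appear on $\overline{\chi}_\varphi(x_b)$ and on $\varphi$ in the statement.
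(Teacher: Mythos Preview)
Your proposal is correct and follows essentially the same route as the paper: both unwind $\braket{\chi}{f}$ over the coset decomposition $x_b+A$, factor $\chi(x_b+a)=\lambda(b)\,\chi_\varphi(x_b)\,\varphi(a)$, and then identify the inner sum as $\braket{\varphi}{f_b}$ and the outer one as $\braket{\lambda}{f_\varphi}$. If anything, you are slightly more explicit than the paper in justifying the factorization and the conjugation convention.
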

\begin{proof}Recall that $\lambda(x+A) = \lambda(x)$, and $\varphi$ is
  a restriction of the $\chi$ to the subgroup $A$.
\begin{eqnarray*}
\braket{\chi}{f} & = &
\sum_b\sum_a\overline{\chi}(x_b+a)f_{b,a}\\
& = & \sum_b\overline{\lambda}(x_b+a)\sum_a\overline{\chi}_\varphi(x_b+a)f_{b,a}\\
& = &
\sum_b\overline{\lambda}(b)\overline{\chi}_\varphi(x_b)\sum_a\overline{\varphi}(a)f_{b,a}\\
& = &
\sum_b\overline{\lambda}(b)\overline{\chi}_\varphi(x_b)\braket{\varphi}{f_b}\\
& = & \braket{\lambda}{f_\varphi}
\end{eqnarray*}
\end{proof}
We are now ready to describe the Fast Fourier transform given an
element $\ket{f} = \sum f_x \ket{x}$.
\begin{enumerate}
\item \label{step_inner_dft} For each $b \in B$ compute the Fourier
  transforms of $\ket{f_b}$. This requires $\# B$ many Fourier
  transforms over $A$.
\item \label{step_bad_mult}As a result of the previous step we have
  for each $b \in B$ and $\varphi \in \hat{A}$ the Fourier
  coefficients $\braket{\varphi}{f_b}$. Compute for each $\varphi$ the
  vectors $\ket{f_\varphi} = \sum_{b \in B}
  \overline{\chi}_{\varphi}(x_b) \braket{\varphi}{f_b} \ket{b}$. This
  requires $\# \hat{A} . \# B = \# E$ many multiplications by roots of
  unity.
\item \label{step_outer_dft} For each $\varphi \in \hat{A}$ compute
  the Fourier transform of $\ket{f_\varphi}$. This requires $\#\hat{A}
  = \# A$ many Fourier transforms over $B$.\label{item-Fourier-B}
\item Any character $\chi$ of $E$ is of the
  form $\chi_\varphi \lambda$ for some $\varphi \in \hat{A}$ and
  $\lambda \in \hat{B}$. Using
  Proposition~\ref{prop-Fourier-coefficient} we have at the end of
  Step~\ref{item-Fourier-B} all the Fourier coefficients
  $\braket{\chi}{f} = \braket{\lambda}{f_\varphi}$.
\end{enumerate}

If the quotient group $B$ itself has a subgroup that is isomorphic to
$A$ then we can apply this process recursively on $B$ to obtain a
divide and conquer procedure to compute the Fourier transform. In the
standard FFT we use $E = \Z/2^n\Z$. The subgroup $A$ is $2^{n-1}E$
which is isomorphic to $\Z/2\Z$ and the quotient group $B$ is
$\Z/2^{n-1}\Z$.

\subsection{Analysis of the Fourier Transform}\label{fourier_analysis}

Our goal is to multiply $k$-variate polynomials over $\R$, with the
degree in each variable less than $M$. This can be achieved by
embedding the polynomials into the algebra of the product group $E =
\inparen{\frac{\Z}{2M\cdot \Z}}^k$ and multiplying them as elements of
the algebra. Since the exponent of $E$ is $2M$, we require a principal
$2M$-th root of unity in the ring $\R$. We shall use the root
$\rho(\alpha)$ (as defined in Section~\ref{root_section}) for the
Fourier transform over $E$.

For every subgroup $A$ of $E$, we have a corresponding FFT. We choose
the subgroup $A$ as $\inparen{\frac{\Z}{2m\cdot \Z}}^k$ and let $B$ be
the quotient group $E/A$. The group $A$ has exponent $2m$ and $\alpha$
is a principal $2m$-th root of unity. Since $\alpha$ is a power of
$\rho(\alpha)$, we can use it for the Fourier transform over $A$. As
multiplications by powers of $\alpha$ are just shifts, this makes
Fourier transform over $A$ efficient.

Let $\F(2M,k)$ denote the complexity of computing the Fourier transform
over $\inparen{\frac{\Z}{2M\cdot \Z}}^k$. We have
\begin{equation}
\F(2M,k) = \inparen{\frac{M}{m}}^k \F(2m,k) + M^k
\mathcal{M}_\R+(2m)^k\F\inparen{\frac{M}{m},k}
\label{first_recursive_step}
\end{equation}
where $\mathcal{M}_\R$ denotes the complexity of multiplications in
$\R$. The first term comes {}from the $\# B$ many Fourier transforms
over $A$ (Step~\ref{step_inner_dft} of FFT), the second term
corresponds to the multiplications by roots of unity
(Step~\ref{step_bad_mult}) and the last term comes {}from the $\# A$
many Fourier transforms over $B$ (Step~\ref{step_outer_dft}).

Since $A$ is a subgroup of $B$ as well, Fourier transforms over $B$
can be recursively computed in a similar way, with $B$ playing the
role of $E$. Therefore, by simplifying the recurrence in
Equation~\ref{first_recursive_step} we get:
\begin{equation}
\F(2M,k) = O\inparen{\frac{M^k\log M}{m^k\log m}\F(2m,k) +
  \frac{M^k\log M}{\log m}\mr}
\label{eqn_with_Fa}
\end{equation}

\begin{lemma}\label{lem-Fmk}
$\F(2m,k) = O(m^{k+1}\log m\cdot \log p)$
\end{lemma}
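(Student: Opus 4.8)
The plan is to compute $\F(2m,k)$, the cost of a Fourier transform over the group $A = \inparen{\Z/2m\Z}^k$, by unrolling the general FFT recurrence of Section~\ref{sect-FFT} all the way down, exploiting the fact that the root $\alpha$ used here is a shift. First I would set up the recursion: for the product group $\inparen{\Z/2m\Z}^k$, pick the subgroup $\inparen{\Z/2\Z}^k$ (or peel off one $\Z/2\Z$ factor at a time), so the quotient is essentially $\inparen{\Z/m\Z}^k$, and apply Proposition~\ref{prop-Fourier-coefficient}. Carried out naively this gives a standard radix-$2$ FFT on $(2m)^k$ points, hence $O\inparen{(2m)^k \log((2m)^k)} = O(m^k \cdot k\log m)$ arithmetic operations in $\R$. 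Since $k$ is ultimately a constant, this is $O(m^k\log m)$ operations in $\R$.

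The key point that makes this lemma better than just multiplying by $\mr$ is that \emph{every} multiplication by a root of unity occurring in the FFT over $A$ is a multiplication by a power of $\alpha$ — because the exponent of $A$ is $2m$, the relevant $2m$-th root of unity is exactly $\alpha$, and all twiddle factors are powers of $\alpha$. By construction of $\R = \Z[\alpha]/(p^c,\alpha^m+1)$, multiplication of an element of $\R$ by a power of $\alpha$ is a cyclic shift (with sign) of its $m$ coefficients, each of which lives in $\Z/p^c\Z$. So such a "multiplication" costs only $O(m\log p^c) = O(m\log p)$ bit operations (here I use that $c$ is a constant, so $\log p^c = O(\log p)$), rather than the full cost $\mr$ of a general ring multiplication. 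Likewise, each addition in $\R$ costs $O(m\log p)$ bit operations.

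Putting the pieces together: the FFT over $A$ performs $O(m^k\log m)$ ring operations (additions and multiplications by powers of $\alpha$), and each such operation costs $O(m\log p)$ bit operations by the shift argument. Hence
\[
\F(2m,k) = O\inparen{m^k\log m}\cdot O\inparen{m\log p} = O\inparen{m^{k+1}\log m\cdot\log p},
\]
as claimed. The main thing to be careful about — the only real obstacle — is checking that no genuine ring multiplication is ever needed inside this sub-FFT: one must verify that at each level of the recursion the coset representatives and twiddle factors $\overline{\chi}_\varphi(x_b)$ from Proposition~\ref{prop-Fourier-coefficient} can be chosen to be powers of $\alpha$, which follows because $\hat A \iso A$ has exponent $2m$ and $\alpha$ is a principal $2m$-th root of unity, so all characters of the relevant subquotients take values in $\inbrace{\alpha^j}$. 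The bound on the number of operations and the per-operation cost are then routine.
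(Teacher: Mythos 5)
Your proposal is correct and follows essentially the same route as the paper: both unroll the FFT over $A=(\Z/2m\Z)^k$ via the standard radix-$2$ decomposition to get $O(m^k\log m)$ ring operations, observe that because the exponent of $A$ is $2m$ and $\alpha$ is the principal $2m$-th root of unity all twiddle factors are powers of $\alpha$ and hence shifts, and charge $O(m\log p)$ per operation (the paper phrases this as the additions dominating the shifts), yielding $O(m^{k+1}\log m\log p)$. Your extra care in verifying that the twiddle factors at every level can be taken to be powers of $\alpha$ is a point the paper leaves implicit, but it is the same argument.
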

\begin{proof}
The FFT over a group of size $n$ is usually done by taking $2$-point
FFT's followed by $\frac{n}{2}$-point FFT's. This involves $O(n\log
n)$ multiplications by roots of unity and additions in base
ring. Using this method, Fourier transforms over $A$ can be computed
with $O(m^k\log m)$ multiplications and additions in $\R$. Since each
multiplication is between an element of $\R$ and a power of $\alpha$,
this can be efficiently achieved through shifting operations. This is
dominated by the addition operation, which takes $O(m\log p)$ time,
since this involves adding $m$ coefficients {}from $\Z/p^c\Z$.
\end{proof}

Therefore, {}from Equation~\ref{eqn_with_Fa},
\begin{equation}\label{eqn_FFT_complexity}
\F(2M,k) = O\inparen{M^k\log M\cdot m\cdot \log p + \frac{M^k\log
    M}{\log m}\mr}
\end{equation}

\medskip

\section{Complexity Analysis}\label{complexity_section}

The choice of parameters should ensure that the following constraints
are satisfied:
\begin{enumerate}
\item $M^k = \Theta\inparen{\frac{N}{\log^2N}}$ and $m = O(\log
  N)$.
\item $M^L = O(N^\varepsilon)$ where $L$ is the Linnik constant
  (Theorem~\ref{linnik_theorem}) and $\varepsilon$ is any constant less
  than $1$. Recall that this makes picking the prime by brute force
  feasible (see Remark~\ref{prime_time}).
\item $p^c > M^k\cdot m\cdot 2^{2u}$ where $u = \frac{2N}{M^km}$. This
  is to prevent overflows during modular arithmetic (see
  Section~\ref{encoding_section}).
\end{enumerate}
\noindent
It is straightforward to check that $k > L+1$ and $c > 5(k+1)$ satisfy
the above constraints. Heath-Brown~\cite{Brown} showed that $L\leq
5.5$ and therefore $c = 42$ clearly suffices.\\

Let $T(N)$ denote the time complexity of multiplying two $N$ bit
integers. This consists of:

\begin{enumerate}
\item[(a)] Time required to pick a suitable prime $p$,
\item[(b)] Computing the root $\rho(\alpha)$,
\item[(c)] Encoding the input integers as polynomials,
\item[(d)] Multiplying the encoded polynomials,
\item[(e)] Evaluating the product polynomial.
\end{enumerate}

As argued before, the prime $p$ can be chosen in $o(N)$ time. To
compute $\rho(\alpha)$, we need to lift a generator of
$\mathbb{F}_p^*$ to $\Z/p^c\Z$ followed by an interpolation. Since $c$
is a constant and $p$ is a prime of $O(\log N)$ bits, the time
required for Hensel Lifting and interpolation is $o(N)$.

The encoding involves dividing bits into smaller blocks, and
expressing the exponents of $q$ in base $M$
(Section~\ref{encoding_section}) and all these take $O(N)$ time since
$M$ is a power of $2$. Similarly, evaluation of the product polynomial
takes linear time as well. Therefore, the time complexity is dominated
by the time taken for polynomial multiplication.

\subsubsection*{Time complexity of Polynomial Multiplication}

{}{}From Equation~\ref{eqn_FFT_complexity}, the complexity of Fourier
transform is given by
\[
\F(2M,k) = O\inparen{M^k\log M\cdot m\cdot \log p + \frac{M^k\log
    M}{\log m}\mr}
\]

\begin{proposition}\cite{scho_complex}
Multiplication in the ring $\R$ reduces to multiplying $O(\log^2{N})$
bit integers and therefore $\mr = T\left(O(\log^2{N})\right)$.
\end{proposition}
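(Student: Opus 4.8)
The plan is to reduce a single multiplication in $\R=\Z[\alpha]/(p^c,\alpha^m+1)$ to one multiplication of two integers with $O(\log^2 N)$ bits, by the standard Kronecker substitution trick, with only linear-time pre- and post-processing. Recall from Section~\ref{complexity_section} that $m=O(\log N)$, that $c$ is an absolute constant, and that $p^c=poly(N)$, so $\log p=O(\log N)$; hence an element of $\R$ --- a polynomial of degree less than $m$ in $\alpha$ with coefficients in $\Z/p^c\Z$ --- is stored in $O(m\log p)=O(\log^2 N)$ bits.

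First I would lift the two inputs $f(\alpha),g(\alpha)\in\R$ to polynomials $\tilde f,\tilde g\in\Z[\alpha]$ of degree less than $m$ by representing each coefficient in $\inbrace{0,1,\dots,p^c-1}$. The integer polynomial product $\tilde f\tilde g$ has degree at most $2m-2$, and each of its coefficients is a sum of at most $m$ products of nonnegative integers each below $p^c$, hence lies in $\insquar{0,\,m\, p^{2c}}$. Put $s=\floor{\log_2\inparen{m\, p^{2c}}}+1$; then $s=O(\log m+c\log p)=O(\log N)$, and every coefficient of $\tilde f\tilde g$ fits strictly inside an $s$-bit block.

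Next I would apply the substitution $\alpha\mapsto 2^s$: evaluating $\tilde f$ and $\tilde g$ at $2^s$ produces nonnegative integers $F$ and $G$, each of at most $ms=O(\log^2 N)$ bits, whose consecutive $s$-bit blocks are exactly the coefficients of $\tilde f$ and of $\tilde g$. Since $s$ exceeds the bit length of the largest coefficient of $\tilde f\tilde g$, the product $FG$ carries no overflow between blocks, so its $i$-th $s$-bit block is precisely the coefficient of $\alpha^i$ in $\tilde f\tilde g$; these can be read off in time linear in $\abs{FG}$. Computing $FG$ costs $T\inparen{O(\log^2 N)}$.

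Finally I would recover the answer in $\R$: reduce modulo $\alpha^m+1$ using $\alpha^{m+i}\equiv-\alpha^i$, that is, replace the coefficient of $\alpha^i$ by (the coefficient of $\alpha^i$) minus (the coefficient of $\alpha^{m+i}$) for $0\le i\le m-2$, and then reduce each of the $m$ resulting coefficients modulo $p^c$. Every coefficient here has $O(\log N)$ bits, so this step runs in $O(m\log p)=O(\log^2 N)$ time. As $T$ is at least linear, the total is $T\inparen{O(\log^2 N)}+O(\log^2 N)=T\inparen{O(\log^2 N)}$, which is the claim. The one point that genuinely needs care is the bound $s=O(\log N)$: it relies precisely on $c$ being constant and on $p^c=poly(N)$ (together with $m=O(\log N)$), i.e.\ on the small ring chosen in Section~\ref{ring_section}; a larger ring would blow up the size of the packed integers and defeat the reduction.
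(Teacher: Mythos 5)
Your proof is correct and follows essentially the same approach as the paper: Kronecker substitution, evaluating the degree-$<m$ polynomials at $2^s$ with block size $s=\Theta(\log N)$ chosen large enough that the coefficients of the integer polynomial product (bounded by $m\,p^{2c}$) do not overflow. You spell out a few details the paper leaves implicit --- the final reduction modulo $\alpha^m+1$ and modulo $p^c$, and the observation that the pre/post-processing is subsumed by $T(O(\log^2N))$ --- but the underlying reduction is the same.
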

\begin{proof}
  Elements of $\R$ can be seen as polynomials in $\alpha$ over
  $\Z/p^c\Z$ with degree at most $m$. Given two such polynomials
  $f(\alpha)$ and $g(\alpha)$, encode them as follows: Replace $\alpha$
  by $2^d$, transforming the polynomials $f(\alpha)$ and $g(\alpha)$
  to the integers $f(2^d)$ and $g(2^d)$ respectively.  The parameter
  $d$ is chosen such that the coefficients of the product $h(\alpha) =
  f(\alpha) g(\alpha)$ can be recovered {}from the product $f(2^d)\cdot
  g(2^d)$. For this, it is sufficient to ensure that the maximum
  coefficient of $h(\alpha)$ is less than $2^d$.  Since $f$ and $g$
  are polynomials of degree $m$, we would want $2^d$ to be greater
  than $m\cdot p^{2c}$, which can be ensured by choosing $d =
  \Theta\left(\log{N}\right)$. The integers $f(2^d)$ and $g(2^d)$ are
  bounded by $2^{md}$ and hence the task of multiplying in $\R$
  reduces to $O(\log^2{N})$ bit integer multiplication.
\end{proof}

Multiplication of two polynomials involve a Fourier transform followed
by component-wise multiplications and an inverse Fourier
transform. Since the number of component-wise multiplications is only
$M^k$, the time taken is $M^k\cdot\mr$ which is clearly subsumed in
$\F(M,k)$. Therefore, the time taken for multiplying the polynomials
is $O(\F(M,k))$. Thus, the complexity of our integer multiplication
algorithm $T(N)$ is given by,
\begin{eqnarray*}
T(N) & = & O(\F(M,k))\\
& = & O\inparen{M^k\log M\cdot m\cdot \log p + \frac{M^k\log
  M}{\log m}\mr}\\
& = & O\inparen{N\log N + \frac{N}{\log N\cdot \log\log N}T(O(\log^2N))}
\end{eqnarray*}

The above recurrence leads to the following theorem.

\begin{theorem}
Given two $N$ bit integers, their product can be computed in
$O(N\cdot \log N\cdot 2^{O(\log^*N)})$ time.
\end{theorem}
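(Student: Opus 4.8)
The plan is to unwind the recurrence
\[
T(N) = O\inparen{N\log N + \frac{N}{\log N\cdot \log\log N}\,T(O(\log^2 N))}
\]
derived above, and show that the overhead factor accumulated over all levels of recursion is $2^{O(\log^* N)}$. First I would fix notation: write the recurrence as $T(N) \le c_1 N\log N + \frac{c_2 N}{\log N\log\log N}\,T(\gamma \log^2 N)$ for absolute constants $c_1,c_2,\gamma$, valid for $N$ larger than some threshold $N_0$, with $T(N) = O(N^2)$ (or any fixed polynomial bound) for $N \le N_0$ serving as the base case.

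The key step is to track the ``work per bit.'' Define $S(N) = T(N)/(N\log N)$; dividing the recurrence through by $N\log N$ gives roughly
\[
S(N) \le c_1 + \frac{c_2 \cdot \gamma\log^2 N \cdot \log(\gamma\log^2 N)}{\log N \cdot \log\log N}\, S(\gamma\log^2 N) \cdot \frac{1}{\log^2 N \cdot \text{(correction)}},
\]
and the point of the parameter choices in the paper --- in particular that the number of component multiplications is $M^k = \Theta(N/\log^2 N)$ rather than $\Theta(N/\log N)$, which is exactly where the extra variable $k$ and the small ring buy us a factor of $\log N$ over Sch\"onhage--Strassen --- is that the multiplicative factor in front of $S(\gamma\log^2 N)$ is bounded by an absolute constant $\kappa$, \emph{not} growing with $N$. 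So I would show $S(N) \le c_1 + \kappa\, S(\gamma\log^2 N)$. Iterating, after $t$ levels the argument has been reduced from $N$ to roughly $\log^{(2t)} N$-type towers, i.e. the recursion depth needed to bring $N$ below $N_0$ is $O(\log^* N)$, since each step replaces $N$ by $O(\log^2 N)$ and two logs of a log-squared is again essentially a log. Hence $S(N) \le c_1(1 + \kappa + \kappa^2 + \cdots + \kappa^{O(\log^* N)}) + \kappa^{O(\log^* N)} S(N_0) = O(\kappa^{O(\log^* N)}) = 2^{O(\log^* N)}$, and therefore $T(N) = N\log N \cdot S(N) = O(N\log N\cdot 2^{O(\log^* N)})$.

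There are two routine things to be careful about. First, the recursion depth: I must verify that the map $N \mapsto \gamma\log^2 N$ iterated $O(\log^* N)$ times drops below $N_0$; this is standard, since $\log(\gamma \log^2 N) = 2\log\log N + O(1) \le 3\log\log N$ for large $N$, so one step of this recurrence is dominated by two steps of the plain $N \mapsto \log N$ iteration, and the latter reaches a constant in $\log^* N$ steps by definition. Second, the additive $c_1 N\log N$ terms: at recursion level $j$ the subproblems have total size $O(N)$ (the branching factor $N/(\log N\log\log N)$ times subproblem size $\log^2 N$ is $O(N\log N/\log\log N) = O(N)$ bits, and more care shows the sum of $(\text{size})\cdot\log(\text{size})$ over all level-$j$ subproblems stays $O(N\log N)$), so summing the $c_1$-contributions over the $O(\log^* N)$ levels again gives $O(N\log N\cdot\log^* N) = O(N\log N \cdot 2^{O(\log^* N)})$.

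The main obstacle --- really the only non-bookkeeping point --- is establishing that the branching factor times the per-call normalization is a genuine \emph{constant} $\kappa$ independent of $N$, i.e. that $\frac{N}{\log N\log\log N}\cdot \frac{T(\gamma\log^2 N)}{N\log N} \le \kappa\, \frac{T(\gamma\log^2 N)}{\gamma\log^2 N\cdot\log(\gamma\log^2 N)}$ holds with $\kappa = O(1)$; this reduces to checking $\frac{\gamma\log^2 N\cdot\log(\gamma\log^2 N)}{\log N\log\log N\cdot\log N} = O(1)$, which is true since the numerator is $\Theta(\log^2 N\cdot\log\log N)$ and the denominator is $\Theta(\log^2 N\cdot\log\log N)$. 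Once this ``the blow-up per level is bounded'' fact is in hand, the $2^{O(\log^*N)}$ bound is immediate from the depth estimate, and the theorem follows.
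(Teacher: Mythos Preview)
Your proposal is correct and is exactly the standard unwinding the paper leaves implicit (the paper simply asserts ``the above recurrence leads to the following theorem'' with no further detail, so your substitution $S(N)=T(N)/(N\log N)$, the verification that the multiplicative factor in front of $S(\gamma\log^2 N)$ is an absolute constant, and the $O(\log^* N)$ depth bound together constitute the intended argument). One small slip worth flagging: the parenthetical claim that total subproblem size at a level is $O(N)$ bits is false, since $\frac{N}{\log N\log\log N}\cdot\gamma\log^2 N=\Theta(N\log N/\log\log N)$, but this is harmless because your primary $S(N)\le c_1+\kappa\,S(\gamma\log^2 N)$ analysis already handles the additive contributions correctly and gives the $2^{O(\log^* N)}$ bound.
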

\subsection{Choosing the Prime Randomly}\label{ERH_section}

To ensure that the search for a prime $p\equiv 1\pmod{2M}$ does not
affect the overall time complexity of the algorithm, we considered
multivariate polynomials to restrict the value of $M$; an alternative is
to use randomization.

\begin{proposition}
Assuming ERH, a prime $p$ such that $p\equiv 1\pmod{2M}$ can be
computed by a randomized algorithm with expected running time
$\tilde{O}(\log^3 M)$.
\end{proposition}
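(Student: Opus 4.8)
The plan is to combine a standard random-sampling argument with the explicit bound on the least prime in an arithmetic progression that follows from ERH, together with fast primality testing. First I would recall the quantitative form of Linnik's theorem under ERH: conditionally on the Extended Riemann Hypothesis, for coprime $d$ and $n$ the least prime $p \equiv d \pmod n$ satisfies $p = O(n^2 \log^2 n)$, and moreover the primes up to $x$ in the progression $d \bmod n$ have density $\sim \frac{1}{\varphi(n)}\cdot\frac{x}{\log x}$ once $x$ is polynomially larger than $n$. Applying this with $d = 1$ and $n = 2M$, there is an explicit $x_0 = O(M^2\log^2 M)$ such that the interval $[1, x_0]$ contains $\Omega\!\left(\frac{x_0}{\varphi(2M)\log x_0}\right) = \Omega\!\left(\frac{M\log M}{\log M}\right) = \Omega(M)$ primes congruent to $1$ modulo $2M$.

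The algorithm then is: repeatedly pick a uniformly random integer $r$ in $\{1, 2, \ldots, \lfloor x_0/(2M)\rfloor\}$, form the candidate $p = 1 + 2Mr$, and test $p$ for primality. By the density estimate above, the number of valid candidates $r$ (those for which $1 + 2Mr$ is prime) is $\Omega(x_0/(2M)) \cdot \frac{1}{\text{polylog}}$, so a single trial succeeds with probability $\Omega(1/\log M)$ (using ERH to rule out the pathological case where all the primes cluster away from the sampled range — one takes $x_0$ a large enough polynomial in $M$ that the prime-counting asymptotic has kicked in). Hence the expected number of trials is $O(\log M)$. Each trial costs one primality test on an $O(\log M)$-bit number, which by a deterministic polylog-time primality test (or a Miller–Rabin test, whose probabilistic guarantee is itself clean under ERH) takes $\tilde{O}(\log^2 M)$ time, plus $O(\log M)$ arithmetic to form the candidate. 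Multiplying the per-trial cost by the expected trial count gives expected running time $\tilde{O}(\log M)\cdot\tilde{O}(\log^2 M) = \tilde{O}(\log^3 M)$, as claimed.

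I would close by remarking that the $\tilde{O}$ absorbs the $\log\log$ factors coming both from the primality test's internal multiplications and from the $1/\log M$ success probability, and that ERH is used in exactly two places: to bound $x_0$ polynomially in $M$ (so the sampled numbers have only $O(\log M)$ bits), and to guarantee the progression is ``well-mixed'' up to $x_0$ so that the success probability per trial is $\Omega(1/\log M)$ rather than something uncontrolled.

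The main obstacle I anticipate is not the sampling argument itself but pinning down the exact ERH-conditional statement needed: one must be careful that the density of primes $\equiv 1 \pmod{2M}$ in $[1,x_0]$ is genuinely $\Theta\!\left(\frac{1}{\varphi(2M)}\cdot\frac{x_0}{\log x_0}\right)$ and not merely an upper or lower bound in one direction, since we need a lower bound on the hit probability. The cleanest route is to invoke the ERH-conditional effective form of the prime number theorem for arithmetic progressions (error term $O(\sqrt{x}\log x)$ uniformly), choose $x_0 = M^{2+o(1)}$ so the main term dominates the error, and read off the count; everything else is routine.
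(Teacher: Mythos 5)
Your proposal matches the paper's proof: both invoke an ERH-conditional prime-counting theorem for the progression $1 \pmod{2M}$ (the paper cites Titchmarsh's bound $\pi(x,2M) = \frac{Li(x)}{\varphi(2M)} + O(\sqrt{x}\log x)$), pick $x$ of order $M^2\cdot\mathrm{polylog}(M)$ so the main term dominates, sample a random residue in the progression, get per-trial success probability $\Omega(1/\log M)$ and hence $O(\log M)$ expected trials, and pay $\tilde{O}(\log^2 M)$ per Miller--Rabin test. One small numerical quibble: $x_0 = O(M^2\log^2 M)$ is a bit too small for the $O(\sqrt{x}\log x)$ error term to be beaten by the main term $\Theta\!\bigl(\tfrac{x}{M\log x}\bigr)$ — you need roughly $x \gtrsim M^2\log^4 M$ (the paper uses $4M^2\log^6 M$) — but you already flag this yourself with the ``$x_0 = M^{2+o(1)}$'' remark, and it changes nothing in the final $\tilde{O}(\log^3 M)$ bound.
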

\begin{proof}
Titchmarsh\cite{Titchmarsh} (see also Tianxin \cite{Tianxin})
showed, assuming ERH, that the number of primes less than $x$ in the
arithmetic progression $\{ 1 + i \cdot 2M\}_{i > 0}$ is given by,
\begin{equation*}
\pi(x,2M) = \frac{Li(x)}{\varphi(2M)} + O(\sqrt{x} \log x)
\end{equation*}
for $2M \leq \sqrt{x} \cdot (\log x)^{-2}$, where $Li(x) =
\Theta(\frac{x}{\log x})$ and $\varphi$ is the Euler totient
function. In our case, since $M$ is a power of two, $\varphi(2M) = M$,
and hence for $x \geq 4M^2 \cdot \log^6 M$, we have $\pi(x, 2M) =
\Omega\inparen{\frac{x}{M\log x}}$. Therefore, for an $i$ chosen
uniformly randomly in the range $1 \leq i \leq 2M \cdot \log^6 M$, the
probability that $i\cdot 2M + 1$ is a prime is at least $\frac{d}{\log
  x}$ for a constant $d$. Furthermore, primality test of an $O(\log
M)$ bit number can be done in $\tilde{O}(\log^2 M)$ time using
Rabin-Miller primality test \cite{Miller, Rabin}. Hence, with $x = 4M^2
\cdot \log^6 M$ a suitable prime for our algorithm can be found in
expected $\tilde{O}(\log^3 M)$ time.
\end{proof}

\section{A Different Perspective } \label{Qp_section}

Our algorithm can be seen as a $p$-adic version of F\"{u}rer's integer
multiplication algorithm, where the field $\C$ is replaced by $\Q_p$,
the field of $p$-adic numbers (for a quick introduction, see Baker's
online notes\cite{Baker}). Much like $\C$, where representing a
general element (say in base $2$) takes infinitely many bits,
representing an element in $\Q_p$ takes infinitely many $p$-adic
digits. Since we cannot work with infinitely many digits, all
arithmetic has to be done with finite precision. Modular arithmetic in
the base ring $\Z[\alpha]/(p^c, \alpha^m + 1)$, can be viewed as
arithmetic in the ring $\Q_p[\alpha]/(\alpha^m + 1)$ keeping a
precision of $\varepsilon = p^{-c}$.

Arithmetic with finite precision naturally introduces some errors in
computation. However, the nature of $\Q_p$ makes the error analysis
simpler. The field $\Q_p$ comes with a norm $\abs{\ \cdot\ }_p$ called
the $p$-adic norm, which satisfies the stronger triangle inequality
$\abs{x+y}_p \leq \max\inparen{\abs{x}_p, \abs{y}_p}$\cite[Proposition
  2.6]{Baker}. As a result, unlike in $\C$, the errors in computation
do not compound.\\

Recall that the efficiency of FFT crucially depends on a special
principal $2M$-th root of unity in $\Q_p[\alpha]/(\alpha^m + 1)$. Such
a root is constructed with the help of a primitive $2M$-th root of
unity in $\Q_p$. The field $\Q_p$ has a primitive $2M$-th root of
unity if and only if $2M$ divides $p-1$\cite[Theorem
  5.12]{Baker}. Also, if $2M$ divides $p-1$, a $2M$-th root can be
obtained {}from a $(p-1)$-th root of unity by taking a suitable power. A
primitive $(p-1)$-th root of unity in $\Q_p$ can be constructed, to
sufficient precision, using Hensel Lifting starting {}from a generator
of $\mathbb{F}_p^*$.

\section{Conclusion}\label{conclusions_section}

There are two approaches for multiplying integers, one using
arithmetic over complex numbers, and the other using modular
arithmetic. Using complex numbers, Sch\"{o}nhage and
Strassen\cite{scho} gave an $O(N \cdot \log N \cdot \log\log N\ldots
2^{O(\log^* N)})$ algorithm. F\"{u}rer\cite{Furer} improved this
complexity to $O(N\cdot\log N \cdot2^{O(\log^*N)})$ using some special
roots of unity. The other approach, that is modular arithmetic, can be
seen as arithmetic in $\Q_p$ with certain precision. A direct
adaptation of the Sch\"{o}nhage-Strassen's algorithm in the modular
setting leads to an $O(N \cdot \log N \cdot \log\log N\ldots
2^{O(\log^* N)})$ algorithm. In this paper, we showed that by choosing
an appropriate prime and a special root of unity, a running time of
$O(N\cdot \log N \cdot 2^{O(\log^*N)})$ can be achieved through
modular arithmetic as well. Therefore, in a way, we have unified the
two paradigms.

\subsection*{Acknowledgement}

We thank V. Vinay, Srikanth Srinivasan and the anonymous referees for
many helpful suggestions that improved the overall presentation of
this paper.

\bibliography{references}

\end{document}